\documentclass[runningheads]{article}

\usepackage{amsmath,amsfonts,amscd,upref,amstext,amssymb}
\usepackage{amsthm}
\usepackage{tikz}
\usepackage{scalefnt}
\usetikzlibrary{trees}
\usetikzlibrary{positioning,quotes}
\usepackage{cancel}
\usepackage[colorlinks=true,urlcolor=blue, citecolor=red,linkcolor=blue,linktocpage,pdfpagelabels, bookmarksnumbered,bookmarksopen]{hyperref}
\usepackage[hyperpageref]{backref}
\usepackage[noabbrev]{cleveref}
\usepackage{comment}
\usepackage[shortlabels]{enumitem}
\usepackage{mdframed}
\usepackage[margin=1in]{geometry}
\usepackage{algpseudocodex}
\usepackage{algorithm}
\usepackage{comment}

\allowdisplaybreaks

\newcommand{\qedwhite}{\hfill \ensuremath{\Box}}

\newtheorem{theorem}{Theorem}
\newtheorem{lemma}[theorem]{Lemma}

\usepackage{xcolor}  

\usepackage[colorinlistoftodos,prependcaption,textsize=tiny]{todonotes}

\title{An $\Omega(n \log n)$ Randomized Lower Bound for Cutting a Cake into Proportionally Fair Pieces }

\author{Stephen Arndt\thanks{Carnegie Mellon University, Tepper School of Business. }  \and Kirk Pruhs\thanks{University of Pittsburgh. Supported in part by NSF grant CCF-2209654.} \and Trung Tran\thanks{University of Pittsburgh}}

\begin{document}
\maketitle
\begin{abstract}
 We consider the classic cake cutting problem in the Robertson-Webb model, with the objective
 of proportional fairness. We show that any randomized algorithm must use $\Omega(n \log n)$ queries. 
\end{abstract}

\section{Introduction}\label{sec:intro}

\subsection{Problem Description}\label{subsec:problem_description}
In the cake cutting problem, $n$ players must partition a certain resource $\mathcal{C}$, such that each player $p \in [1, n]$ receives a ``fair'' share of $\mathcal{C}$ according to their own private measure $\mu_p$. Consistent with much of the cake cutting literature, we assume the ``cake'' $\mathcal{C}$ is the unit interval $\mathcal{C} = [0, 1]$. Further, the private measures $\mu_p$ are defined on collections of subintervals of $\mathcal{C}$, and satisfy the following conditions.

\begin{enumerate}
    \item \textbf{Definedness:} For all intervals $I \subseteq [0, 1]$,  $\mu_p(I)$ is well-defined.
    \item \textbf{Nonnegativity:} For all intervals $I \subseteq [0, 1]$, $\mu_p(I) \geq 0$.
    \item \textbf{Additivity:} For all pairs of disjoint intervals $I_1, I_2 \subseteq [0, 1]$, $\mu_p(I_1 \cup I_2) = \mu_p(I_1) + \mu_p(I_2)$.
    \item \textbf{Divisibility:} For all intervals $I \subseteq [0, 1]$ and $\lambda \in [0, 1]$, there exists a subinterval $I' \subseteq I$ such that $\mu_p(I') = \lambda \mu_p(I)$.
    \item \textbf{Normalization:} $\mu_p(\mathcal{C}) = 1$.
\end{enumerate}


A cake division protocol is an iterative procedure which queries players about their measures $\mu_p$, until eventually partitioning the cake $\mathcal{C}$ into $n$ pieces and assigning a piece to each player $p \in [1, n]$. These queries are either \textit{cuts}, which require players to identify an interval of a given value $v \in [0, 1]$, or \textit{evaluation queries}, which require players to evaluate a given interval $I$. We study the fairness condition of \textit{proportionality}, where a cake division protocol must guarantee each player $p$ receives a piece of value at least $1/n$ according to their own private measure $\mu_p$. The full description of our model, introduced by Robertson and Webb \cite{robertson-webb} and expanded upon by Woeginger and Sgall \cite{woeginger-sgall}, is given in \Cref{sec:rccm}.

\subsection{Previous Results}\label{subsec:previous_results}
The cake cutting problem arose from Polish mathematicians in the 1940s. Let us begin with upper bounds. Steinhaus \cite{steinhaus} described a deterministic protocol which uses $\Theta(n^2)$ cuts. Even and Paz \cite{even-paz} gave a deterministic divide-and-conquer protocol which uses $\Theta(n \log n)$ cuts, and this remains the best-known deterministic protocol (based on cut / query complexity) to date. A central open question in this field is whether there exists a deterministic protocol which uses $\Theta(n)$ cuts -- Even and Paz \cite{even-paz} explicitly conjecture that no such protocol exists, and Robertson and Webb \cite{robertson-webb} strengthen this conjecture by claiming they would place their money against a substantial improvement on the $\Theta(n \log n)$-cut upper bound. 

We now turn to lower bounds. In 1998, Robertson and Webb \cite{robertson-webb} introduced the influential Robertson-Webb query model, a model of computation for cake cutting which is based on the aforementioned cut and evaluation queries.  Woeginger and Sgall \cite{woeginger-sgall} showed a deterministic $\Omega(n \log n)$ query lower bound on the number of queries (cuts plus evaluations), under the restriction players must receive connected pieces. 
Further in the conclusion of their paper Woeginger and Sgall asked whether this lower bound
can extended to randomized algorithms. 
Later Edmonds and Pruhs \cite{edmonds-pruhs} extended the lower bound of Woeginger and Sgall,
removing the restriction that the assigned pieces needed to be subintervals. 
Further Edmonds and Pruhs claimed without proof that the lower bound on Woeginger
and Sgall could be extended to randomized algorithms.

\subsection{Our Results and Organization of the Paper}\label{subsec:previous_results}

We prove that there is an $\Omega(n \log n)$  lower bound for the expected number of queries used by a  randomized protocol under the restriction that players receive connected pieces, thus substantiating the claim from \cite{edmonds-pruhs}. 

In \Cref{sec:rccm}, we fully define the restricted cake cutting model introduced by Woeginger and Sgall \cite{woeginger-sgall}. In \Cref{sec:lb_construction}, we describe a set of instances introduced by these authors which will be crucial for our result, and then we prove our main result.

\section{The Restricted Cake Cutting Model}\label{sec:rccm}
In \Cref{subsec:rw_model}, we describe the Robertson-Webb Model. In \Cref{subsec:ws_model}, we describe the Woeginger-Sgall Restricted Model. In \Cref{subsec:discussion_models}, we provide a brief discussion on these models.

\subsection{The Robertson-Webb Model \cite{robertson-webb}}\label{subsec:rw_model}
Recall the cake $\mathcal{C}$ is the unit interval $\mathcal{C} = [0, 1]$, and each of the $n$ players has a private measure $\mu_p$ on collections of subintervals of $\mathcal{C}$. For $\alpha \in [0, 1]$, define the \textit{$\alpha$-point} of a player $p$ as the infimum over all numbers $x \in [0, 1]$ such that $\mu_p([0, x]) = \alpha$. In the \textit{Robertson-Webb Model}, the following two queries are allowed.

\begin{itemize}
    \item Cut$(p, \alpha)$ returns the $\alpha$-point of player $p$.
    \item Eval$(p, x)$ returns player $p$'s evaluation of the piece $[0, x]$, meaning $\mu_p([0, x])$. $x$ must be the return value of a previous cut.
\end{itemize}

The protocol can also assign pieces to players during its execution.

\begin{itemize}
    \item Assign$(p, x_i, x_j)$ assigns the interval $[x_i, x_j)$ to player $p$. $x_i$ and $x_j$ must be the return values of previous cuts.
\end{itemize}

\subsection{The Woeginger-Sgall Restricted Model \cite{woeginger-sgall}}\label{subsec:ws_model}
In the \textit{Woeginger-Sgall Restricted Model}, Assign$(p, x_i, x_j)$ is issued only once for each player $p$. In other words, each player $p$ must receive a single subinterval of the cake $\mathcal{C}$.

\subsection{Discussion of Models}\label{subsec:discussion_models}
Although the query restrictions in the Robertson-Webb model are severe, they are crucial for avoiding uninteresting, ``cheating'' protocols, and they are general enough to capture most classic discrete protocols in the literature. As a result, the Robertson-Webb model has received considerable attention. The Woeginger-Sgall single subinterval restriction is also severe, and they remark that it seems to significantly cut down the set of possible protocols. However, most of the known protocols in the literature obey this restriction, including the canonical Even-Paz divide-and-conquer protocols.

\section{The Lower Bound Construction}\label{sec:lb_construction}

\subsection{The Woeginger-Sgall Instances \cite{woeginger-sgall}}
Let $\epsilon < 1/n^4$ be an arbitrary small positive real number. For $i \in [1, n]$, let $X_i \subset [0, 1]$ be the set of $n$ points $i/(n+1) + k\epsilon$ for $k \in [1, n]$. Further, let $X = \bigcup_{1 \leq i \leq n} X_i$. We then design our instances so that for all $i \in [1, n]$, the $i/n$-points of players $p = 1, 2, \dots, n$ are distinct and from the $n$ points $X_i$.

We now describe how we achieve this design. For all players $p \in [1, n]$, all of the cake value for player $p$ is concentrated in tiny intervals $\mathcal{I}_{p, i}$ of length $\epsilon$ around their $i/n$-points, where $i \in [1, n]$. Specifically, player $p$ has cake value $i/(n^2+n)$ immediately to the left of their $i/n$-point, and cake value $(n-i)/(n^2+n)$ immediately to the right of their $i/n$-point. Observe that each player $p$ correctly has cake value $1/n$ between their $(i-1)/n$-point and $i/n$-point for all $i \in [1, n]$, and has cake value $0$ between adjacent intervals $\mathcal{I}_{p, i}$ and $\mathcal{I}_{p, i+1}$.

We may also characterize each instance by a string of length $n^2$, consisting of $n$ ``chunks'' of length $n$ which each contain a (possibly different) permutation of the players $p = 1, 2, \dots, n$. For $i \in [1, n]$, the $i$'th chunk describes the order of the players' $i/n$-points on $X_i$. Thus the set of all such instances is in bijection with the set of all such strings. Moving forward, we will only concern ourselves with this set of instances in the restricted cake cutting model.

Define a \textit{primitive protocol} as a protocol which only makes cuts at $i/n$-points of players $p$ for $i \in [1, n]$.

\begin{lemma}\label{lemma:primitive}\cite{woeginger-sgall}
On the set of instances $I$ in the restricted model described above, the following holds. For all fair protocols $\mathcal{P}$ on instance $I$, there exists a fair primitive protocol $\mathcal{P}'$ on instance $I$, which makes the same number of cuts and evaluation queries.
\end{lemma}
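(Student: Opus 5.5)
The plan is a step-by-step simulation. Given a fair protocol $\mathcal{P}$ on a Woeginger--Sgall instance $I$, we build $\mathcal{P}'$ that runs $\mathcal{P}$ query by query. Whenever $\mathcal{P}$ would cut at a point $y$, $\mathcal{P}'$ instead cuts at a ``rounded'' point $r(y)$ that is an $i/n$-point of some player. The rounding is chosen so that every answer $\mathcal{P}$ would have received can be recomputed by $\mathcal{P}'$ without extra queries. Each cut of $\mathcal{P}$ is replaced by exactly one cut of $\mathcal{P}'$, and each evaluation by exactly one evaluation, so the counts agree.

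\textbf{Step 1: structure of a cut answer.} Consider Cut$(p,\alpha)$ in $\mathcal{P}$. Player $p$'s value is concentrated on the intervals $\mathcal{I}_{p,i}$. Each $\mathcal{I}_{p,i}$ has mass $i/(n^2+n)$ to the left of $p$'s $i/n$-point and $(n-i)/(n^2+n)$ to the right. So the answer lies in some $\mathcal{I}_{p,i}$, and the index $i$, together with which side of the $i/n$-point it falls on, is determined by $\alpha$ alone.

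\textbf{Step 2: rounding.} Define $\mathcal{P}'$ to issue Cut$(p,i/n)$, i.e.\ cut at $p$'s $i/n$-point, which lies in $X_i$. We must check that the rounded point $r(y)$ and the true point $y$ induce the same combinatorial information about other players. The points of $X$ are spaced $\epsilon$ apart, and the value of every other player $q$ is concentrated in length-$\epsilon$ windows around points of $X_i$. The true point $y$ may sit strictly inside some other player's mass interval, so $\mu_q([0,y])$ can differ from $\mu_q([0,r(y)])$. This means an Eval$(q,y)$ answer in $\mathcal{P}$ is not always recoverable.

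To handle this, we would follow Woeginger--Sgall and treat the simulation as an adversary argument. Since the measures are only defined up to the density inside each $\epsilon$-window, the adversary may choose the densities so that all of $q$'s mass near $X_i$ sits at $q$'s own $i/n$-point. The only exception is the $\epsilon$-neighbourhood structure, which forces the order of points. Under this choice, $\mu_q([0,y])$ depends only on which points of $X$ lie to the left of $y$, and it equals $\mu_q([0,r(y)])$ up to a term determined by $\alpha$. Evaluations in $\mathcal{P}$ are thus answerable from evaluations at the rounded points, one for one.

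\textbf{Step 3: fairness is preserved.} $\mathcal{P}$ assigns intervals $[x_a,x_b)$ whose endpoints are cut points. $\mathcal{P}'$ assigns $[r(x_a),r(x_b))$. Rounding is monotone (order-preserving on cut points), so the pieces remain disjoint intervals, one per player. It then remains to show each player $p$ keeps value at least $1/n$. A piece of $\mathcal{P}$ worth $\ge 1/n$ to $p$ must contain mass from two consecutive windows $\mathcal{I}_{p,i},\mathcal{I}_{p,i+1}$ in the right proportions. Concretely, it contains at least the right part of $p$'s $i/n$-point and the left part of its $(i+1)/n$-point, or a whole window. Rounding endpoints to $i/n$-points keeps these parts inside the piece, because between the windows $p$ has zero value.

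\textbf{Main obstacle.} The delicate step is Step 2: making rounding consistent with the evaluation answers. An endpoint may fall inside another player's mass window, and rounding could then change that player's value by up to $1/n$. I would first try the choice in which the rounded point is the nearest $i/n$-point of the cut player on the side dictated by $\alpha$. I would then verify, by a careful case check over the $\epsilon$-ordering in $X_i$, that the value a player $q$ receives from a rounded interval is never less than its value from the original interval.
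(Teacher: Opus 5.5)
The paper gives no proof of this lemma; it only cites Woeginger--Sgall. Your overall plan is the natural one: simulate $\mathcal{P}$, replace Cut$(p,\alpha)$ by Cut$(p,i/n)$ where $\mathcal{I}_{p,i}$ is the window containing the $\alpha$-point, and at the end move every endpoint to the $i/n$-point of its window. But neither of the two steps that carry the content is established.

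\textbf{Step 2.} You miss the fact that makes the simulation work and replace it with an illegitimate move. The points of $X_i$ are $\epsilon$ apart, and each player's mass sits in an $\epsilon$-window immediately around its own point (with the same shape for every player). So windows of distinct players have disjoint interiors, and the $\alpha$-point $y$ of $p$ lies where no other player has mass. This gives $\mu_q([0,y])=\mu_q([0,x_{p,i}])$ for $q\ne p$, writing $x_{p,i}$ for $p$'s $i/n$-point, and $\mu_p([0,y])=\alpha$. Because the shape of the mass inside a window is fixed, it also lets $\mathcal{P}'$ recompute the exact real number $y$ from $x_{p,i}$ and $\alpha$. You need that, since $\mathcal{P}$'s later queries may depend on $y$ itself, not only on combinatorial information. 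Your alternative, an adversary re-choosing densities so that $q$'s mass sits at its own $i/n$-point, is not available. The lemma requires $\mathcal{P}'$ to be fair on the given instance, and point masses would violate divisibility and leave $\alpha$-points undefined.

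\textbf{Fairness.} This step is wrong as stated. The invariant you plan to verify, that no player's value decreases under rounding, is false. If $q$ receives $[y,x_{q,i+1})$ with $y$ the $(i/n-\delta)$-point of $q$, rounding $y$ to $x_{q,i}$ drops $q$'s value from $1/n+\delta$ to $1/n$. The characterization in Step 3 is also false. Suppose $a$ cuts off an amount $\delta>0$ of the left part of $\mathcal{I}_{q,k}$, and $b$ cuts off $\delta'>0$ of the left part of $\mathcal{I}_{q,k+1}$. Then $[a,b)$ contains no whole window and not the whole left part of $\mathcal{I}_{q,k+1}$. Yet it is worth $\frac{n+k+1}{n^2+n}-\delta-\delta'\ge\frac{1}{n}$ whenever $\delta+\delta'\le\frac{k}{n^2+n}$. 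What saves this piece is that $b$ is pushed \emph{out} to $x_{q,k+1}$, not that some part stays inside.

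The correct argument uses disjointness as follows. Rounding sends a cut point in one of $q$'s windows to that window's $q$-point, where $\mu_q([0,x_{q,j}])=j/n$. Any other cut point stays in another player's window, hence inside a single gap between consecutive $q$-windows. So one of two things holds. Either the rounded piece contains some $[x_{q,m},x_{q,m+1}]$ and is worth at least $1/n$. Or both original endpoints lie between the end of $\mathcal{I}_{q,m-1}$ and the start of $\mathcal{I}_{q,m+1}$ for some $m$. In that case $\mathcal{P}$'s piece was worth at most $\mu_q(\mathcal{I}_{q,m})=1/(n+1)<1/n$, contradicting the fairness of $\mathcal{P}$.
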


Thus a lower bound on the query complexity of \textit{primitive} protocols $\mathcal{P}^*$, on the set of instances $I$ described above, will induce a lower bound on the query complexity of all protocols $\mathcal{P}$. In fact, we may extend \Cref{lemma:primitive} to \textit{cuts-only} primitive protocols $\mathcal{P}'$.

\begin{lemma}\label{lemma:cuts_only_primitive}
On the set of instances $I$ in the restricted model described above, the following holds. For all fair primitive protocols $\mathcal{P}^*$ on instance $I$, there exists a fair primitive protocol $\mathcal{P}'$ on instance $I$ which only uses cuts, and where the number of cuts of $\mathcal{P}'$ equals the number of cuts and evaluation queries of $\mathcal{P}^*$.
\end{lemma}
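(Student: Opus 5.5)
The plan is to simulate $\mathcal{P}^*$ step by step, replacing each evaluation query by a single cut that returns at least as much information. The key observation concerns a primitive protocol: every cut it has made is at some $j/n$-point of some player $q$, so every point $x$ on which an evaluation may be issued lies in $X$. On the Woeginger--Sgall instances, the value $\mu_p([0,x])$ for $x \in X_i$ depends only on the relative position of $x$ and player $p$'s $i/n$-point inside $X_i$. All of player $p$'s value near $X_i$ sits in the tiny interval $\mathcal{I}_{p,i}$ around its $i/n$-point, and the parts of that interval left and right of the $i/n$-point carry values $i/(n^2+n)$ and $(n-i)/(n^2+n)$. Here we use $\epsilon<1/n^4$, so the intervals around different $X_i$ do not interact, and we use the convention that $\mu_p([0,x])$ counts the mass just left of $x$ appropriately. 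Concretely, $\mu_p([0,x])$ equals $(i-1)/n$ plus a term determined by whether $x$ is to the left of, equal to, or to the right of $p$'s $i/n$-point in $X_i$. One should also check the endpoint cases, namely $x$ exactly at player $p$'s point and $x$ at a neighbouring point of $X_i$ within $\epsilon$. I would verify this carefully from the description of the mass distribution.

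The simulation then goes as follows. Whenever $\mathcal{P}^*$ issues Eval$(p,x)$ with $x\in X_i$ (where $i$ is known to the protocol, since $X_i$ are disjoint and publicly known sets), $\mathcal{P}'$ instead issues Cut$(p,i/n)$. This returns the exact position of $p$'s $i/n$-point in $X_i$. From it, $\mathcal{P}'$ computes $\mu_p([0,x])$ by the formula above, and then continues exactly as $\mathcal{P}^*$ would on that answer. The new cut is a cut at an $i/n$-point, so $\mathcal{P}'$ remains primitive. Cuts of $\mathcal{P}^*$ are copied verbatim.

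Induction on the number of steps shows the following. After each step, the transcript of $\mathcal{P}'$ determines the transcript of $\mathcal{P}^*$ on the same instance, and the set of cut points of $\mathcal{P}'$ contains that of $\mathcal{P}^*$. Hence every Assign$(p,x_i,x_j)$ issued by $\mathcal{P}^*$ uses legal endpoints in $\mathcal{P}'$, so $\mathcal{P}'$ can issue the same assignments. Each player still gets a single interval of the same value, so $\mathcal{P}'$ is fair and in the restricted model. The number of cuts of $\mathcal{P}'$ is exactly the number of cuts plus evaluations of $\mathcal{P}^*$, with one cut per query; one may optionally skip redundant cuts, but equality is what is asked.

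The main obstacle is the first step: making the claim that Eval$(p,x)$ is a function of the relative order of $x$ and $p$'s $i/n$-point fully rigorous, including the boundary conventions at $x$ equal to the $i/n$-point (the $\alpha$-point is an infimum, and values are concentrated in $\epsilon$-intervals that may straddle neighbouring points of $X_i$). I would handle this by fixing the concrete placement of the mass in $\mathcal{I}_{p,i}$, of length $\epsilon$ and lying between adjacent points of $X_i$ at spacing $\epsilon$, and tabulating the three cases.
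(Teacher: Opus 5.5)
Your proposal is correct and follows essentially the same simulation as the paper: cuts are copied verbatim, and each Eval$(p,x)$ with $x\in X_i$ is replaced by Cut$(p,i/n)$, whose exact answer reveals the left/equal/right relation to $x$, which is all the evaluation could convey. Your additional bookkeeping fills in details the paper leaves implicit: you verify that the Eval answer is a function of that relative order, and you note that the cut set of $\mathcal{P}'$ contains that of $\mathcal{P}^*$, so the same Assign calls remain legal.
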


\begin{proof}
$\mathcal{P}'$ simulates $\mathcal{P}^*$. If $\mathcal{P}^*$ calls Cut$(p, i/n)$, $\mathcal{P}'$ calls Cut$(p, i/n)$. If $\mathcal{P}^*$ calls Eval$(p, x)$ where $x \in X_i$ is the result of some previous Cut$(p', i/n)$, $\mathcal{P}'$ calls Cut$(p, i/n)$. Observe that from the evaluation query, $\mathcal{P}^*$ only learns whether $p$'s $i/n$-point is to the left, right, or in the same place (if $p' = p$) as $p'$'s $i/n$-point. $\mathcal{P}'$ performs Cut$(p, i/n)$, which reveals the exact location of $p$'s $i/n$-point, and thus its position relative to $p'$'s $i/n$-point. Thus $\mathcal{P}'$ can properly simulate $\mathcal{P}^*$ under this replacement of $\mathcal{P}^*$'s Eval operations. $\qedwhite$
\end{proof}

Thus a lower bound on the cut complexity of \textit{cuts-only} primitive protocols $\mathcal{P}'$, on the set of instances $I$ described above, will induce a lower bound on the query complexity of primitive protocols $\mathcal{P}^*$, and thus a lower bound on the query complexity of all protocols $\mathcal{P}$.

We now characterize the fairness condition of primitive protocols $\mathcal{P}^*$. Upon termination, $\mathcal{P}^*$ must identify $n-1$ cuts performed, say at positions $0 \leq y_1 \leq y_2 \leq \dots \leq y_{n-1} \leq 1$; further define $y_0 = 0$ and $y_1 = 1$. $\mathcal{P}^*$ then must assign the $n$ pieces $[y_{i-1}, y_i]$ for $1 \leq i \leq n$ according to the permutation of players $\phi$. On a subset of these instances $I$, we can enforce an explicit fairness condition.

Consider the following subset of instances $J$. Let $\pi$ be an arbitrary permutation of the integers $1, 2, \dots, n$. For all $i \in [1, n]$, and players $p \in [1, n]$,  

\begin{itemize}
    \item If $\pi(p) = i$, let player $p$'s $i/n$-point be the $i$'th point in $X_i$.
    \item If $\pi(p) < i$, require player $p$'s $i/n$-point to be before the $i$'th point in $X_i$.
    \item If $\pi(p) > i$, require player $p$'s $i/n$-point to be after the $i$'th point in $X_i$.
\end{itemize}

\begin{lemma}\label{lemma:must_find_pi}\cite{woeginger-sgall}
If the primitive protocol $\mathcal{P}^*$ is fair on an instance $J$ from this subset with underlying permutation $\pi$, then $\phi = \pi^{-1}$.
\end{lemma}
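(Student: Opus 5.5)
The plan is to turn the fairness requirement into position constraints on the cuts, and then read $\phi$ off the order of the points inside each cluster $X_k$. Throughout, piece $k$ means $[y_{k-1},y_k]$ and $\phi(k)$ is its owner. I take the cuts $y_1\le\dots\le y_{n-1}$ together with $y_0=0$ and $y_n=1$; the statement's ``$y_1=1$'' is evidently a typo for $y_n=1$.

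Since $\mathcal{P}^*$ is primitive, every $y_k$ is some player's $j/n$-point for some $j$, so it lies in some cluster $X_j$. By construction, each player's measure is concentrated on the clusters, with a left part of mass $j/(n^2+n)$ and a right part of mass $(n-j)/(n^2+n)$ at cluster $j$. So the value of an interval $[y_{k-1},y_k]$ to a player $r$ is determined by two pieces of data: which clusters the endpoints lie in, and on which side of $r$'s point in those clusters each endpoint falls.

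\textbf{Step 1 (main obstacle): $y_k\in X_k$ for every $k\in[1,n-1]$.} I will prove this by a counting/induction argument. For every player, the value strictly between consecutive clusters is $0$. A piece whose endpoints lie in the same cluster therefore has value to $r$ at most one left part or one right part, which is strictly less than $1/n=(n+1)/(n^2+n)$. More generally, for any piece, the value to its owner is at most the owner's value between the clusters containing its endpoints, taken with the most favourable sides. I will induct from the left, showing $y_k\ge \min X_k$: otherwise the prefix $[0,y_k]$ covers strictly less than $k$ ``units'' of $1/n$ for every player, so the $k$ owners of pieces $1,\dots,k$ cannot all be satisfied. Symmetrically, from the right I get $y_k\le\max X_k$. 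Making this cumulative counting precise is where I expect the real work to be. The constraint is pieces-to-distinct-owners, so I cannot simply add the values of a single player; I would first try the per-piece bound ``a piece spanning clusters $a<b$ is worth at most $(b-a)/n$ to anyone, and worth less than $1/n$ if $a=b$'', and then argue that $n$ pieces must advance through $n$ cluster-gaps, with one gap per piece.

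\textbf{Step 2: exact side conditions.} Once $y_{k-1}\in X_{k-1}$ and $y_k\in X_k$ are known, the value of piece $k$ to its owner $r$ can be at most the right part of cluster $k-1$ plus the left part of cluster $k$, which is exactly $1/n$. Equality forces two conditions: $y_{k-1}$ is at or before $r$'s $(k-1)/n$-point, and $y_k$ is at or after $r$'s $k/n$-point. Applied to all owners, this says that in cluster $X_k$, every owner of pieces $1,\dots,k$ has its $k/n$-point $\le y_k$. It also says the owner of piece $k+1$ has its $k/n$-point $\ge y_k$; pushing this inductively to the right, all owners of pieces $k+1,\dots,n$ have their $k/n$-points $\ge y_k$.

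\textbf{Step 3: identification on $J$.} The points of $X_k$ are distinct and $y_k$ is one of them. Hence exactly $k$ players have $k/n$-point $\le y_k$, namely $\{\phi(1),\dots,\phi(k)\}$, and $y_k$ is the $k$-th point of $X_k$. In an instance of $J$, the $k$-th point of $X_k$ is held by $\pi^{-1}(k)$, and the players before it are exactly those with $\pi(p)<k$. Therefore $\{\phi(1),\dots,\phi(k)\}=\{p:\pi(p)\le k\}$ for every $k$. Taking successive differences gives $\phi(k)=\pi^{-1}(k)$, i.e.\ $\phi=\pi^{-1}$.
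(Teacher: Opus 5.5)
The conclusion $\phi=\pi^{-1}$ is right, but the route through Steps 2 and 3 does not work. (The paper gives no proof of this lemma; it only cites Woeginger--Sgall, so I can only assess your argument on its own terms.)

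\textbf{Where it fails.} The argument breaks at the end of Step 2, and Step 3 rests on it. Your side conditions constrain a player only at the two boundary clusters of its own piece. Nothing links a player's position in $X_k$ relative to $y_k$ with its position in another cluster relative to that cluster's cut. So you cannot push ``at or after $y_k$'' from the owner of piece $k+1$ to the owners of pieces $k+2,\dots,n$, nor ``at or before $y_k$'' back to the owners of pieces $1,\dots,k-1$.

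The propagated claim is in fact false, and so is the conclusion that $y_k$ is the $k$-th point of $X_k$. Take $n=3$ and $\pi=\mathrm{id}$, with the order $(1,3,2)$ of the $1/3$-points on $X_1$ and $(1,2,3)$ on $X_2$; both are allowed in $J$. Cut at player $2$'s $1/3$-point and player $3$'s $2/3$-point, which are the third points of $X_1$ and $X_2$, and give pieces $1,2,3$ to players $1,2,3$. In units of $1/12$ (so $1/n=4$), player $1$ gets $3+3$, player $2$ gets $2+3$, and player $3$ gets $1+3$. The division is fair, and $\phi=\pi^{-1}$ as the lemma says. Yet $y_1$ is the third point of $X_1$, and player $3$, the owner of piece $3$, has its $1/3$-point before $y_1$.

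\textbf{Missing idea.} You need to use how $J$ places the players: in $X_k$, the players at positions $<k$, $=k$, $>k$ are exactly those with $\pi(p)<k$, $=k$, $>k$. With your side conditions, prove $\pi(\phi(k))\ge k$ by induction on $k$.
\begin{enumerate}[(i)]
\item If $\pi(\phi(k))\ge k$, then $\phi(k)$ sits at position $\ge k$ in $X_k$ and at or before $y_k$. So $y_k$ is at or after the $k$-th point.
\item Since $\phi(k+1)$ sits at or after $y_k$, we get $\pi(\phi(k+1))\ge k$.
\item Equality would force $y_k$ to be the $k$-th point, and hence $\phi(k)=\pi^{-1}(k)=\phi(k+1)$, which is impossible.
\end{enumerate}
A permutation $\pi\circ\phi$ with $\pi(\phi(k))\ge k$ for all $k$ is the identity, so $\phi=\pi^{-1}$.

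\textbf{Smaller points.}
\begin{enumerate}[(i)]
\item The upper bound in Step 2 is false. If $y_{k-1}$ lies strictly before the owner's $(k-1)/n$-point, the piece contains both halves of the owner's mass at $X_{k-1}$, and likewise at $X_k$. For example, with cuts at the $k$-th points, the owner $\pi^{-1}(k)$ of piece $k\ge 2$ receives $(n+k)/(n^2+n)>1/n$.
\item Your side conditions are nevertheless true, for a different reason. Each cluster is worth only $1/(n+1)<1/n$ to every player, so the owner needs positive mass from both boundary clusters.
\item The bound $(b-a)/n$ in Step 1 is false as well, since a piece from $X_a$ to $X_{a+1}$ can be worth $2/(n+1)$. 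You only need the case $a=b$: every piece must have its endpoints in different clusters, and then pigeonhole gives $y_k\in X_k$.
\item That pigeonhole works only for the normalized instances, which carry an extra mass $1/(n+1)$ to the left of $X_1$. As literally described, each player's total is $n/(n+1)$ and the same counting shows no fair division exists. You also need that $[y_{n-1},1]$ with $y_{n-1}\in X_n$ is worth at most $1/(n+1)$.
\end{enumerate}
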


We now have a clear understanding of the class of protocols we must lower bound (cuts-only primitive protocols), and a required condition for achieving fairness (identifying the underlying permutation $\pi$ of an instance $J$ from this subset).

\subsection{Decision Tree Argument}
Let $\mathcal{D}$ be a uniform distribution over the subset of instances $J$ described above. We now state our main theorem, \Cref{thm:main}.

\begin{theorem}\label{thm:main}
All fair deterministic cuts-only primitive protocols require $\Omega(n \log n)$ cuts and evaluation queries on expectation on input distribution $\mathcal{D}$.
\end{theorem}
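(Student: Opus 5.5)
We prove \Cref{thm:main} with an information-theoretic decision tree argument. Yao's principle then lifts it to randomized protocols. We work only with cuts-only primitive protocols, which is enough by \Cref{lemma:primitive} and \Cref{lemma:cuts_only_primitive}.

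Fix a deterministic fair cuts-only primitive protocol $\mathcal{P}'$ and view it as a decision tree. Each internal node is a query Cut$(p,i/n)$. Its answer is the position of $p$'s $i/n$-point among the $n$ points of $X_i$, so each node has at most $n$ children. This branching factor is too large for a naive counting argument: $n!$ leaves at depth $d$ only forces $d \ge \log_n n! = \Theta(n)$.

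The key step is to draw $\mathcal{D}$ so that each cut carries $O(1)$ bits of information about $\pi$. The instances in $J$ are parameterized by $\pi$ together with the free orderings of the other points in each $X_i$, subject to the before/after constraints. We choose the distribution as follows.
\begin{itemize}
\item Pick $\pi$ uniformly.
\item In each chunk $i$, place player $\pi^{-1}(i)$ at the $i$'th point.
\item Let the players with $\pi(p)<i$ occupy the first $i-1$ slots and those with $\pi(p)>i$ occupy the last $n-i$ slots, each group in a uniformly random order independent of everything else.
\end{itemize}
Conditioned on the whole history, the answer to Cut$(p,i/n)$ is then determined by three things: whether $\pi(p)<i$, $=i$, or $>i$, and a uniformly random slot within the corresponding block, which is independent of $\pi$. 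So the only $\pi$-relevant information revealed by a cut is a ternary comparison of $\pi(p)$ with $i$.

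Formally, we simulate $\mathcal{P}'$ by a randomized ternary decision tree that queries only comparisons ``$\pi(p)$ vs.\ $i$'' and generates the block positions with its own coins. This simulation must remain consistent across repeated cuts in the same chunk. To handle that, the coins fix each block's internal order lazily, which is possible because those orders are independent of $\pi$. Averaging over the coins gives a deterministic ternary tree with no larger expected depth.

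By \Cref{lemma:must_find_pi}, every leaf determines $\phi=\pi^{-1}$, so the tree has at least $n!$ distinct leaves reached with probability $1/n!$ each. For a ternary tree, the expected depth under the uniform distribution is at least $\log_3 n! = \Omega(n\log n)$, by Kraft's inequality or the entropy bound. This gives the theorem.

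The main obstacle is justifying that the within-block ordering leaks nothing about $\pi$. The slot a player occupies in chunk $i$ must be independent of $\pi$ given the comparison outcome, and also independent of answers in other chunks. The distribution above is built to make this hold. A second, smaller point is reconciling it with the paper's $\mathcal{D}$ as ``uniform over $J$''. For a fixed $\pi$, the number of valid completions is $\prod_i (i-1)!(n-i)!$, which does not depend on $\pi$. So uniform over $J$ coincides with our two-stage sampling, and the argument applies to $\mathcal{D}$ directly.
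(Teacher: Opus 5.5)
Your proof is correct and rests on the same insight as the paper. A cut reveals about $\pi$ only the ternary comparison of $\pi(p)$ with $i$, because the slot inside the forced block is uninformative. This gives the bound $\log_3 n! = \Omega(n\log n)$.

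The formalization differs. The paper stays on the $n$-ary decision tree of $\mathcal{P}'$ and inducts over its nodes. It groups the children of a node into $v_<, v_=, v_>$ and uses \Cref{lemma:uniform_pis} to get transition probabilities $a,b,c$ equal to the fractions of valid permutations. It then applies Jensen to show $a\log_3 a+b\log_3 b+c\log_3 c\ge -1$, which amounts to unrolling the entropy bound node by node.

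You instead collapse the tree. You simulate $\mathcal{P}'$ by a ternary comparison tree whose within-block slots are filled lazily from private coins, derandomize by averaging, and cite Kraft/entropy. This gains three things:
\begin{itemize}
\item The no-leakage claim becomes constructive, since the positions are produced without knowing $\pi$. This handles adaptivity and repeated cuts cleanly.
\item It makes explicit a fact the paper's induction uses without comment. All children in one class share the same set of valid permutations, which is what justifies assigning them exactly $aN$, $bN$, or $cN$ permutations.
\item It makes \Cref{lemma:uniform_pis} unnecessary, since the Kraft/entropy bound for a deterministic ternary tree identifying a uniform $\pi$ needs no posterior-uniformity statement.
\end{itemize}

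The price is two verifications.
\begin{itemize}
\item Lazy uniform filling must reproduce the joint law of $(\pi,\text{transcript})$ under $\mathcal{D}$. It does, because a uniform bijection conditioned on its revealed values is uniform on the rest.
\item For every fixed coin string, the simulated transcript must be a legal transcript of $\mathcal{P}'$ on some instance of $J$ with the true $\pi$. It is, because any in-block partial placement extends to a full instance. Fairness on $J$ together with \Cref{lemma:must_find_pi} then forces distinct permutations to distinct leaves.
\end{itemize}
You state only the conclusion of this second point; spell out that one line.
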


We will prove \Cref{thm:main} using a decision tree argument. Via Yao's Principle, \Cref{thm:main}, \Cref{lemma:primitive}, and \Cref{lemma:cuts_only_primitive} induce an $\Omega(n\log n)$ lower bound on the query complexity of all randomized protocols $\mathcal{P}$. First, we describe a randomized adversary strategy which induces this distribution $\mathcal{D}$, to simplify our argumentation.

Consider the following adversary strategy. Let $\mathcal{P}'$ be an arbitrary cuts-only primitive protocol. Select a permutation $\pi$ over the integers $1, 2, \dots, n$ uniformly at random. On each Cut$(p, i/n)$, check if $\pi(p) < i$, $\pi(p) = i$, or $\pi(p) > i$. Uniformly place $p$'s $i/n$-point among the valid remaining $i/n$-points, obeying the requirements of the instances $J$. If $\mathcal{P}'$ terminates, uniformly place the remaining $i/n$-points of the players according to these requirements.

\begin{lemma}\label{lemma:adv_strategy}
The described adversary strategy generates an instance according to uniform distribution $\mathcal{D}$.
\end{lemma}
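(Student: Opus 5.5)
We will show that for every fixed $\pi$ the adversary's output, conditioned on $\pi$, is uniform over the instances in $J$ with underlying permutation $\pi$, and that the different classes have the same size. Combining these two facts gives uniformity over all of $J$.

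\textbf{Step 1: the structure of $J$.} For fixed $\pi$ and a fixed chunk $i$, the constraints are as follows. The player $p$ with $\pi(p)=i$ is pinned to the $i$'th point of $X_i$. The $i-1$ players with $\pi(p)<i$ must occupy the first $i-1$ points of $X_i$, in any order. The $n-i$ players with $\pi(p)>i$ must occupy the last $n-i$ points, in any order. The chunks are chosen independently of each other. So the number of instances in $J$ with underlying permutation $\pi$ is
\[
N=\prod_{i=1}^n (i-1)!\,(n-i)!,
\]
and this number does not depend on $\pi$.

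Different permutations give disjoint classes. Indeed, $\pi$ can be recovered from the instance: $\pi^{-1}(i)$ is the player sitting at the $i$'th point of $X_i$. Since every player lands at the $i$'th point of exactly one chunk, the classes are well defined and disjoint. Therefore $\mathcal{D}$ is uniform over $\pi$, and given $\pi$ it is uniform over the $N$ completions.

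\textbf{Step 2: the adversary, conditioned on $\pi$.} Fix $\pi$ and a deterministic $\mathcal{P}'$. The adversary's choices split into independent per-chunk processes. Within chunk $i$, each query Cut$(p,i/n)$ with $\pi(p)<i$ places $p$ uniformly on an unused point among the first $i-1$. Each query with $\pi(p)>i$ places $p$ uniformly on an unused point among the last $n-i$. When $\mathcal{P}'$ terminates, the remaining players are placed uniformly.

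The order in which players are revealed may depend adaptively on earlier answers, including answers from other chunks. The key observation is a sequential-sampling fact. Suppose $m$ items are to be matched to $m$ slots, and at each step an adaptively chosen unmatched item is sent to a uniformly random free slot. The resulting bijection is uniform, whatever rule is used to choose the items. I will prove this by induction on $m$, or directly. Any fixed bijection $\sigma$ arises with probability $\prod_{k=1}^m 1/(m-k+1)=1/m!$. The reason is that, along the run consistent with $\sigma$, the item chosen at each step is determined by the history, and that history is itself determined by $\sigma$.

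The same computation works jointly across all chunks and both sides within each chunk. Fix a target completion $c$. Along the run consistent with $c$, the sequence of queries is determined. Each query contributes a factor of $1/(\text{number of free slots on its side of its chunk})$. Taking the product over each side of each chunk gives $1/(i-1)!$ and $1/(n-i)!$, so the total probability is exactly $1/N$. This matches the conditional distribution of $\mathcal{D}$.

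\textbf{Main obstacle.} The delicate point is the adaptivity: which player gets queried next depends on previous answers. I handle this by computing the probability of each full outcome along its own, uniquely determined, history rather than reasoning about independence. The pinned players (those with $\pi(p)=i$) contribute a factor of $1$. Queries repeated on the same $(p,i)$ just return the already stored answer and contribute nothing. With these remarks the argument is complete.
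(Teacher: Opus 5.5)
Your proof is correct and follows the same two-part structure as the paper's proof. First, the number of instances in $J$ with a given underlying permutation $\pi$ does not depend on $\pi$ (you compute it explicitly as $\prod_{i=1}^n (i-1)!\,(n-i)!$ and check that the classes are disjoint), so choosing $\pi$ uniformly is right; second, uniform placement of the remaining points yields a uniform instance given $\pi$, where your per-outcome probability computation along the uniquely determined adaptive history makes rigorous what the paper asserts in a single line.
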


\begin{proof}
The uniform choice of $\pi$ is correct, because $\mathcal{D}$ has an equal number of instances with nonzero probability associated with any given permutation. Further, the $i/n$-points of all remaining players are chosen uniformly at random by the adversary strategy, and so it correctly produces a uniform sample over the instances with nonzero probability in $\mathcal{D}$.
\end{proof}

We now proceed with the decision tree argument. Construct the decision tree $T = (V, E)$ of $\mathcal{P}'$. Each node of $T$ represents a state of $\mathcal{P}'$, and contains the next cut operation. The edges down the tree $T$ contain the answers to these cut queries, and thus the state transitions of $\mathcal{P}'$. Observe that the branching factor of $T$ is variable and up to size $n$. This high branching factor means a naive decision tree argument will not induce the correct lower bound. Instead, we must prove some important properties of $T$.

\begin{lemma}\label{lemma:uniform_pis}
Consider an arbitrary node $v \in V$ in the decision tree. Let $\Pi$ be the set of valid permutations $\pi$ at $v$, meaning those permutations $\pi$ which could still describe the instance at $v$. Then under input distribution $\mathcal{D}$, $\Pi$ also has a uniform distribution.
\end{lemma}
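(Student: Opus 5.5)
We will prove \Cref{lemma:uniform_pis} by Bayes' rule: for every $\pi \in \Pi$ we show that the probability of reaching $v$ given $\pi$ does not depend on $\pi$. Fix a node $v$ and let $h$ be the sequence of query/answer pairs on the root-to-$v$ path. The protocol $\mathcal{P}'$ is deterministic, so $v$ is reached exactly when the adversary's answers agree with $h$. Each answer to Cut$(p,i/n)$ is a point of $X_i$, and each such point is consistent or inconsistent with $\pi$ according to the relation between $\pi(p)$ and $i$. The valid set $\Pi$ therefore consists of the permutations $\pi$ such that, for every query in $h$, the revealed position lies in the region allowed by $\pi$. This region is the $i$'th point of $X_i$ if $\pi(p)=i$, a point before it if $\pi(p)<i$, and a point after it if $\pi(p)>i$.

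By \Cref{lemma:adv_strategy}, the adversary samples from $\mathcal{D}$, so $\Pr[\pi \mid v] \propto \Pr[\pi]\cdot\Pr[h \mid \pi]$. The prior $\Pr[\pi]=1/n!$ is uniform, so it suffices to show that $\Pr[h\mid\pi]$ is the same for all $\pi\in\Pi$. We will factor $\Pr[h\mid \pi]$ over chunks $i\in[1,n]$. Given $\pi$, the instances of $J$ are a product over $i$ of independent uniform orderings within $X_i$, subject to the constraints. In chunk $i$, the player $\pi^{-1}(i)$ sits at the $i$'th point. The $i-1$ players with $\pi(p)<i$ occupy the first $i-1$ points in a uniformly random order. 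The $n-i$ players with $\pi(p)>i$ occupy the last $n-i$ points in a uniformly random order. Hence $\Pr[h\mid\pi]$ is a product over chunks of the probability that the specified players land at the specified points.

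Now fix a chunk $i$, and suppose the history reveals positions for some set $S_i$ of players. For $\pi\in\Pi$, each revealed position is consistent with $\pi$, so $\pi$ places every revealed player in the correct block. The probability of the revealed assignment factors into the blocks. The middle point contributes factor $1$. In the left block, if $a$ players are revealed there, the probability that they occupy their specified distinct points is $\frac{1}{(i-1)(i-2)\cdots(i-a)}$. In the right block, with $b$ revealed players, the factor is $\frac{1}{(n-i)\cdots(n-i-b+1)}$. The key observation is that $a$ and $b$ are determined by $h$ alone. A revealed point before the $i$'th point forces its player into the left block under any consistent $\pi$, and a point after it forces the right block. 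Thus the chunk-$i$ factor is the same for every $\pi\in\Pi$, and so is the full product.

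We will also have to check that the sequential adversary of \Cref{lemma:adv_strategy} produces exactly these conditional probabilities. This amounts to observing that uniform sequential placement among the remaining valid points realizes a uniform random ordering within each block. We expect this bookkeeping to be the main obstacle, in particular making sure that the count of available slots at each step depends only on $h$ and not on $\pi$. Once it is settled, $\Pr[\pi\mid v]=\frac{1}{n!}c_h/\Pr[v]$ for all $\pi\in\Pi$, and $\Pr[\pi\mid v]=0$ otherwise. Hence $\Pi$ is uniformly distributed at $v$.
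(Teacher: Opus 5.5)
Your proposal is correct and rests on the same idea as the paper's proof: a uniform prior on $\pi$ combined with the fact that the likelihood of reaching $v$ is the same for every $\pi\in\Pi$. The paper only asserts that the adversary ``chooses its instance identically'' under any two consistent permutations, and your chunk-wise count, in which the numbers $a$ and $b$ of revealed left/right points depend only on $h$, makes that claim rigorous. The final check on the sequential adversary, which you flag as the main obstacle, is not needed: whether $v$ is reached depends only on the sampled instance, and you have already computed $\Pr[h\mid\pi]$ directly under $\mathcal{D}$.
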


\begin{proof}
Let $\pi_1, \pi_2 \in \Pi$ be arbitrary. The described adversary strategy, which is equivalent to sampling from distribution $\mathcal{D}$ via \Cref{lemma:adv_strategy}, chooses its instance identically under $\pi_1$ or $\pi_2$. Thus the current distribution of ``incomplete'' instances at node $v$ is identical under $\pi_1$ or $\pi_2$, so the probability $\pi_1$ or $\pi_2$ is the correct permutation is equal. $\qedwhite$
\end{proof}

\begin{lemma}\label{lemma:expected_depth}
Let $v \in V$ be arbitrary. Suppose there are $N$ valid permutations $\pi$ at $v$. Then the expected depth of $v$ in $T$, under input distribution $\mathcal{D}$, is at least $\log_3 N + 1$.
\end{lemma}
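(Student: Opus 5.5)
The plan is to prove a slightly more convenient statement by induction and then read off the claim. For a node $v$ with $N$ valid permutations, let $D(v)$ denote the expected number of further cuts that $\mathcal{P}'$ makes from $v$ until it terminates, under $\mathcal{D}$ conditioned on reaching $v$. I will show $D(v) \geq \log_3 N$. Adding the query made at $v$ itself, i.e.\ counting the edge out of $v$ whenever $N \geq 2$, gives the stated $\log_3 N + 1$. When $N=1$ the bound reduces to the trivial one, and I read the statement as concerning the depth of the leaf reached through $v$. The induction is on the height of the subtree rooted at $v$.

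\textbf{Base case.} If $v$ is a leaf, the protocol has terminated and output $\phi$. By \Cref{lemma:must_find_pi}, fairness forces $\phi = \pi^{-1}$. Hence only one permutation can be valid at a leaf, so $N = 1$ and $\log_3 N = 0 \leq D(v)$.

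\textbf{Inductive step.} Let $v$ be an internal node querying Cut$(p, i/n)$. The crucial observation comes from the adversary description in \Cref{lemma:adv_strategy}: the answer to this cut depends on $\pi$ only through the trichotomy $\pi(p) < i$, $\pi(p) = i$, $\pi(p) > i$. The exact position within the allowed range is chosen uniformly and independently of the rest of $\pi$.

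Accordingly, partition $\Pi$ into at most three classes $\Pi_1, \Pi_2, \Pi_3$ with sizes $N_1, N_2, N_3$, where $N_1 + N_2 + N_3 = N$. Every child $u$ of $v$ lies in exactly one class. Moreover, the set of valid permutations at $u$ is exactly the corresponding $\Pi_j$, since the specific position carries no further information about $\pi$. By \Cref{lemma:uniform_pis}, the distribution on $\Pi$ at $v$ is uniform. Therefore the branch goes to class $j$ with probability $N_j/N$. Conditioned on class $j$, the outcome is distributed over the children in that class, each of which has $N_j$ valid permutations. Applying the induction hypothesis to each such child gives
\[
D(v) \;\geq\; 1 + \sum_{j} \frac{N_j}{N} \log_3 N_j .
\]

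\textbf{Closing the recurrence.} It remains to show that $\sum_j \frac{N_j}{N}\log_3 N_j \geq \log_3 N - 1$. Rewrite the difference as
\[
\log_3 N - \sum_j \frac{N_j}{N}\log_3 N_j \;=\; \sum_j \frac{N_j}{N}\log_3 \frac{N}{N_j}.
\]
This is the base-$3$ entropy of a distribution on at most three outcomes, so it is at most $1$ by concavity of the logarithm (Jensen's inequality). Substituting gives $D(v) \geq \log_3 N$, completing the induction.

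\textbf{Main obstacle.} The step I expect to need the most care is justifying that the high branching factor, up to $n$ children, does not help the protocol. Concretely, I must verify that all children arising from the same trichotomy outcome carry the same valid-permutation set, and that, given $\pi$, the split among those children is independent of which $\pi \in \Pi_j$ holds. This follows from the adversary's independent uniform placement in \Cref{lemma:adv_strategy} together with \Cref{lemma:uniform_pis}, but it should be argued explicitly. Once this grouping is in place, the rest is the standard entropy bound for a ternary decision tree.
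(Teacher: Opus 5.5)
Your proof is correct and is essentially the paper's argument. Both run an induction over the decision tree, group the children of a Cut$(p,i/n)$ node by the trichotomy $\pi(p)<i$, $\pi(p)=i$, $\pi(p)>i$ with branch probabilities $N_j/N$ from \Cref{lemma:uniform_pis}, and bound the ternary entropy by $1$ (the paper phrases this as Jensen for the convex $x\log_3 x$). You also explicitly justify that all children in one class share the same valid set, which the paper assumes silently.

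One bookkeeping fix: your $D(v)$ already counts the cut made at $v$, so the extra $+1$ in the statement is not that query again. It comes from the paper measuring depth in nodes (a leaf has depth $1$), so the paper's quantity is exactly $D(v)+1$. Read literally as ``one more query'' it would be a double count, and the expected number of cuts can in fact fall below $\log_3 N+1$ (for $n=2$ a single cut already determines $\pi$).
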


\begin{proof}
We proceed by induction. Via \Cref{lemma:must_find_pi}, the leaves must have $N = 1$, and so the base case holds. Next, let $v \in V$ be an arbitrary non-leaf node in $T$ with $N$ valid permutations $\pi$. Suppose Cut$(p, i/n)$ is performed at $v$. Let $v_<$, $v_=$, and $v_>$ be the sets of children of $v$ with $\pi(p) < i$, $\pi(p) = i$, and $\pi(p) > i$, respectively. Suppose the nodes of $v_<$, $v_=$, and $v_>$ have $aN, bN$, and $cN$ valid permutations $\pi$ where $a + b + c = 1$. Because the set of valid permutations $\pi$ at $v$ has a uniform distribution via \Cref{lemma:uniform_pis}, the probability of transitioning to a node in $v_<, v_=$, or $v_>$ is $a, b$ or $c$. Via the induction, the expected depth of each node in $v_<, v_=$, and $v_>$ is $\log_3 (aN) + 1$, $\log_3 (bN) + 1$, and $\log_3 (cN) + 1$. Thus the expected depth of $v$ is

\begin{align*}
&\geq a\left(\log_3(aN) + 1\right) + b\left(\log_3(bN) + 1\right) + c\left(\log_3(cN) + 1\right) + 1 \\
&= a\log_3(aN) + b\log_3(bN) + c\log_3(cN) + (a+b+c) + 1 \\
&= \left(a\log_3 N + b\log_3 N + c \log_3 N\right) + \left(a\log_3a + b\log_3b + c\log_3 c\right) + 2 \\
&= \log_3 N + \left(a\log_3a + b\log_3b + c\log_3 c\right) + 2
\end{align*}
\end{proof}

Observe $f(x) = x \log_3 x$ is a convex function. Thus via Jensen's inequality,

\begin{align*}
\frac{1}{3}a\log_3a + \frac{1}{3}b\log_3b + \frac{1}{3}c\log_3 c &\geq \left(\frac{1}{3}a + \frac{1}{3}b + \frac{1}{3}c\right) \log_3 \left(\frac{1}{3}a + \frac{1}{3}b + \frac{1}{3}c\right) \\
&= \frac{1}{3}\log_3\frac{1}{3} \\
&= -\frac{1}{3}
\end{align*}

giving $a\log_3a + b\log_3b + c\log_3 c \geq -1$. Substituting this inequality gives the desired result. $\qedwhite$

\begin{proof}(of \Cref{thm:main})
At the root node $r \in V$ of the decision tree $T$, there are $N = n!$ valid permutations $\pi$. Via \Cref{lemma:expected_depth}, the expected depth of $r$, under input distribution $\mathcal{D}$, is at least $\log_3 N + 1 = \log_3 n! + 1 = \Omega(n \log n)$. This is equal to the expected runtime of $\mathcal{P}'$ under input distribution $\mathcal{D}$, completing the proof. $\qedwhite$
\end{proof}

\bibliographystyle{splncs04}
\bibliography{bib}

\end{document}